\newcommand{\removelatexerror}{\let\@latex@error\@gobble}
\begin{document}

\title{On The Equivalence of Tries and Dendrograms - Efficient Hierarchical Clustering of Traffic Data}

\author{Chia-Tung Kuo}
\affiliation{
  \institution{University of California, Davis}
}
\email{tomkuo@ucdavis.edu}

\author{Ian Davidson}
\affiliation{
  \institution{University of California, Davis}
}
\email{davidson@cs.ucdavis.edu}

\begin{abstract}
The widespread use of GPS-enabled devices  generates voluminous and continuous amounts of traffic data but analyzing such data for interpretable and actionable insights poses challenges. A hierarchical clustering of the trips has many uses such as discovering shortest paths, common routes and often traversed areas.  
However, hierarchical clustering typically has time complexity of $O(n^2 \log n)$ where $n$ is the number of instances, and is difficult to scale to large data sets associated with GPS data. Furthermore, incremental hierarchical clustering is still a developing area.
Prefix trees (also called tries) can be efficiently constructed and updated in linear time (in $n$).
We show how a specially constructed trie can compactly store the trips and further show this trie is equivalent to a dendrogram that would have been built by classic agglomerative hierarchical algorithms using a specific distance metric. This allows creating hierarchical clusterings of  GPS trip data and updating this hierarchy in linear time.
We demonstrate the usefulness of our proposed approach on a real world data set of half a million taxis' GPS traces, well beyond the capabilities of agglomerative clustering methods. Our work is not limited to trip data and can be  used with other data with a string representation.
\end{abstract}

%
%

\maketitle

\section{Introduction}
\label{sec:intro}

Location tracking devices have become widely popular over the last decade and this has enabled the collection of large amounts of spatial temporal trip data.
Given a collection of such trip data, clustering is often a natural start to explore the general properties of the data and among clustering methods, hierarchical clustering is well suited as it provides a set of groupings at different levels where each grouping at one level is a refinement of the groupings at the previous levels. 
This dendrogram structure can also naturally represent the evolutionary/temporal nature of trip data where each level in the dendrogram corresponds to a particular time.
However, one significant drawback of the standard agglomerative hierarchical clustering is its $O(n^2 \log n)$ run time, which is prohibitive when the number of instances, $n$ is large. For example, in our experiments our data set has nearly half a million trips. Classic agglomerative algorithms would take days or even weeks to build a dendrogram on such large data set. 

In this paper we consider an alternative way to efficiently construct a dendrogram of the trip data without the long computation time.
We achieve this by first converting each trip to a trajectory string and then building a prefix tree from the trip-strings. We then formally show the equivalence between a prefix tree and a dendrogram by showing that the prefix tree we create would have been built by a classic agglomerative method using a specific distance metric on the strings. This result is not trivial as the prefix tree is created top-down and the dendrogram bottom-up.

\textbf{Creating Trajectory Strings from GPS Data.}
We discretize both the spatial and temporal dimensions with respective pre-defined resolutions, effectively converting each spatial location to a unique symbol. For example, in our experiments we disceretize the San Francisco Bay area into a 100 $\times$ 100 grid so our alphabet contains 10,000 symbols. We can then naturally represent a trip as a sequence (i.e. string) of the discretized regions (symbols). The symbol at position $i$ in the string represents the location of the trip at time step $i$ as shown in Figure \ref{fig:build_string}.


\textbf{Creating Trip Tries.} A prefix tree is a tree structure built from strings where each path from the root to any node corresponds to a unique string prefix and is commonly used for indexing and retrievals of text and symbolic data. 
A prefix tree (trie) can be constructed in linear time to both number of trips, $n$ and maximum number of discretized time steps, $l$. 
An example of such a tree is shown in Figure \ref{fig:tree_example}. 

\textbf{Uses of Trip Tries.} A \emph{trip trie} is not only a hierarchical clustering (as we shall see) but has other uses.
For example, easy to understand visualizations of a collection of trips such as heat maps (see Figures \ref{fig:heatmap} and \ref{fig:loc_occur_order}) can be created from a trip trie; and trip tries constructed from different collections of trips can be compared (i.e. Table \ref{tab:tree_stats}).
Tries have many useful properties such as the ability to efficiently compute Levenshtein distances and we describe uses such as creating more robust clusters using these properties.
\emph{Though tries are commonly used in the database literature for tasks such as retrieval and indexing, to our knowledge they have not been used for the purposes we outlined in this paper.}

\textbf{Uses Beyond GPS Trip Data.} In this paper we have focused on GPS trip data as the application domain is important and has readily available public data. However, our work is applicable in other domains where the data represents behavior over time such as settings where some categorical event (a symbol) occurs over time (the position of the symbol). In our earlier work \cite{davidson2012} we modeled behavioral data as these event strings but other applications exist in areas such as computer network traffic where each location is an IP address.

Our contributions can be summarized as follows.
\begin{itemize}
\item We provide a novel way to organize trip data into symbolic data and then a prefix tree/trie (see section \ref{sec:approach}). Tries can be built and updated in linear time to the number of instances and alphabet size.
\item We derive the equivalence between a prefix tree and a dendrogram and verify it empirically (see Theorem \ref{thm:metric} and derivation in section \ref{sec:property}).
\item We discuss extensions of our work including uses  beyond hierarchical clustering such as outlier detection (see section \ref{sec:other}).
\item We demonstrate the usefulness of our dendrogram in illustrating interesting insights of trip data on a real data set of GPS traces of taxis (see section \ref{sec:experiment}).
\end{itemize}

Our paper is organized as follows. We describe the steps to create string representations and a trip trie in section \ref{sec:approach}, which is then followed by a proof of its equivalence to standard agglomerative hierarchical clustering in section \ref{sec:property}. We further discuss other different ways this trip trie can be used in section \ref{sec:other}. In section \ref{sec:experiment} we evaluate our approach on a real world dataset of taxis' GPS traces and demonstrate the usefulness of our approach in obtaining insights on the traffic dynamics. We discuss related work in section \ref{sec:related} and conclude our paper in section \ref{sec:conclusion}.

\begin{figure}[h]
\centering
\includegraphics[width=0.35\textwidth]{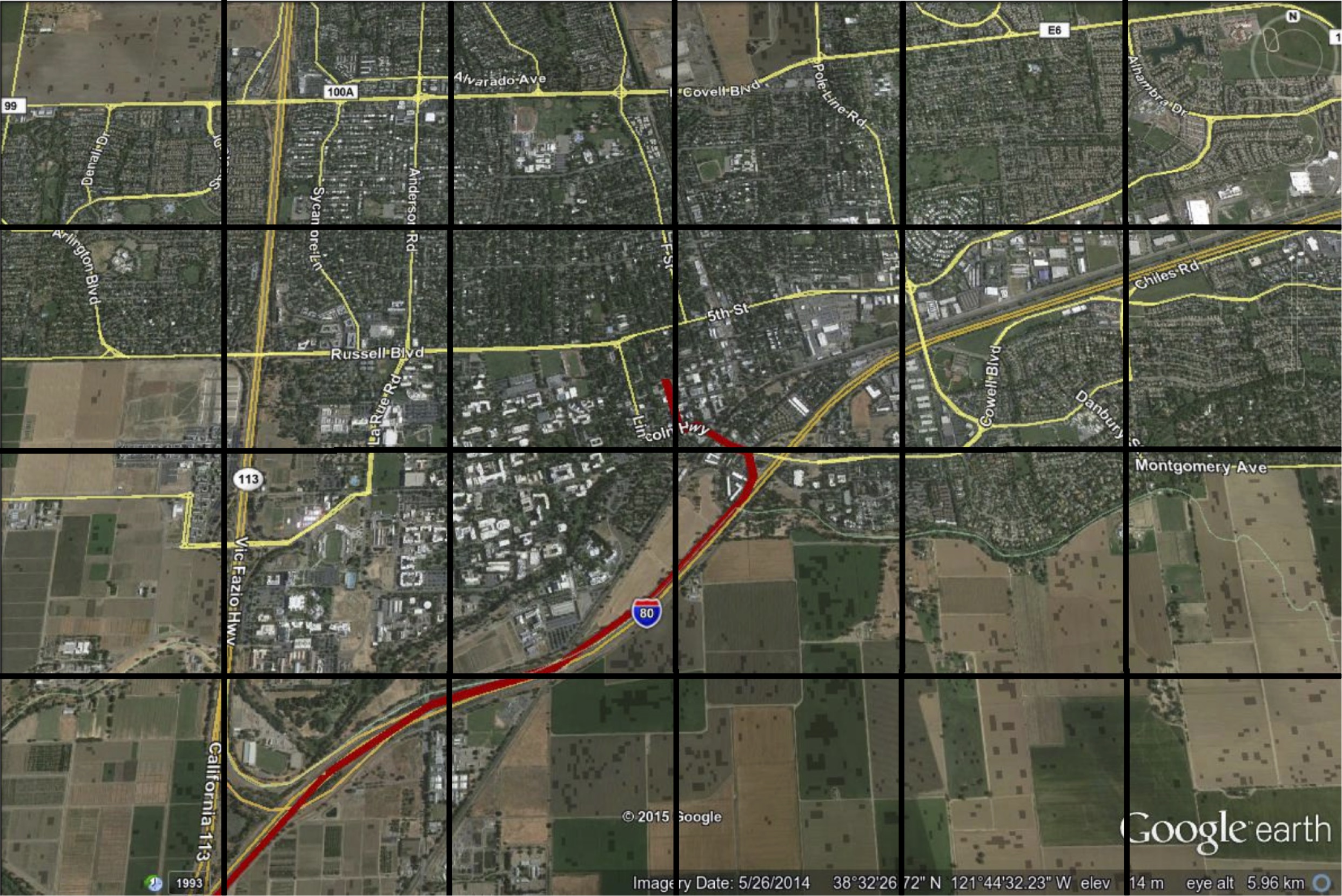}
\caption{
An example that shows the construction of a trajectory string from the given spatial grids (in black) and temporal resolution. 
Assume symbols are assigned to the grids such that the grids in the top row are $z_1, z_2, \dots, z_6$ and the grids in the second row are $z_7, z_8, \dots, z_{12}$, and so on where the last symbol is $z_{24}$ the bottom right grid.
If the starting location is $z_9$, the trip (marked in red) can be represented as $z_{9}z_{10}z_{16}z_{15}z_{21}z_{20}$.}
\label{fig:build_string}
\end{figure}

\begin{figure}[h]
\centering
\begin{small}
\begin{tikzpicture}[%
       >=stealth,
       blue/.style = {draw, circle, minimum size = 4.5mm, color = blue},
       green/.style = {draw, circle, minimum size = 4.5mm, color = green},
	   inode/.style = {draw, ellipse, color = black, minimum size = 4.5mm},
	   dots/.style = {color = black, minimum size = 4.5mm},
	   anno/.style = {color = black, minimum size = 4.5mm},
       node distance=12mm,
       thick,
       on grid,
       auto
     ]
     
     \node[inode] (0) {$?????$};
     \node[dots] (1) [below of = 0] {$\ldots$};
     \node[inode] (2) [left of = 1] {$z_1????$};
     \node[inode] (3) [right of = 1] {$z_{9}????$}; 
	 \node[anno] (8) [above right of = 3, node distance = 12mm] {\includegraphics[width=0.12\textwidth]{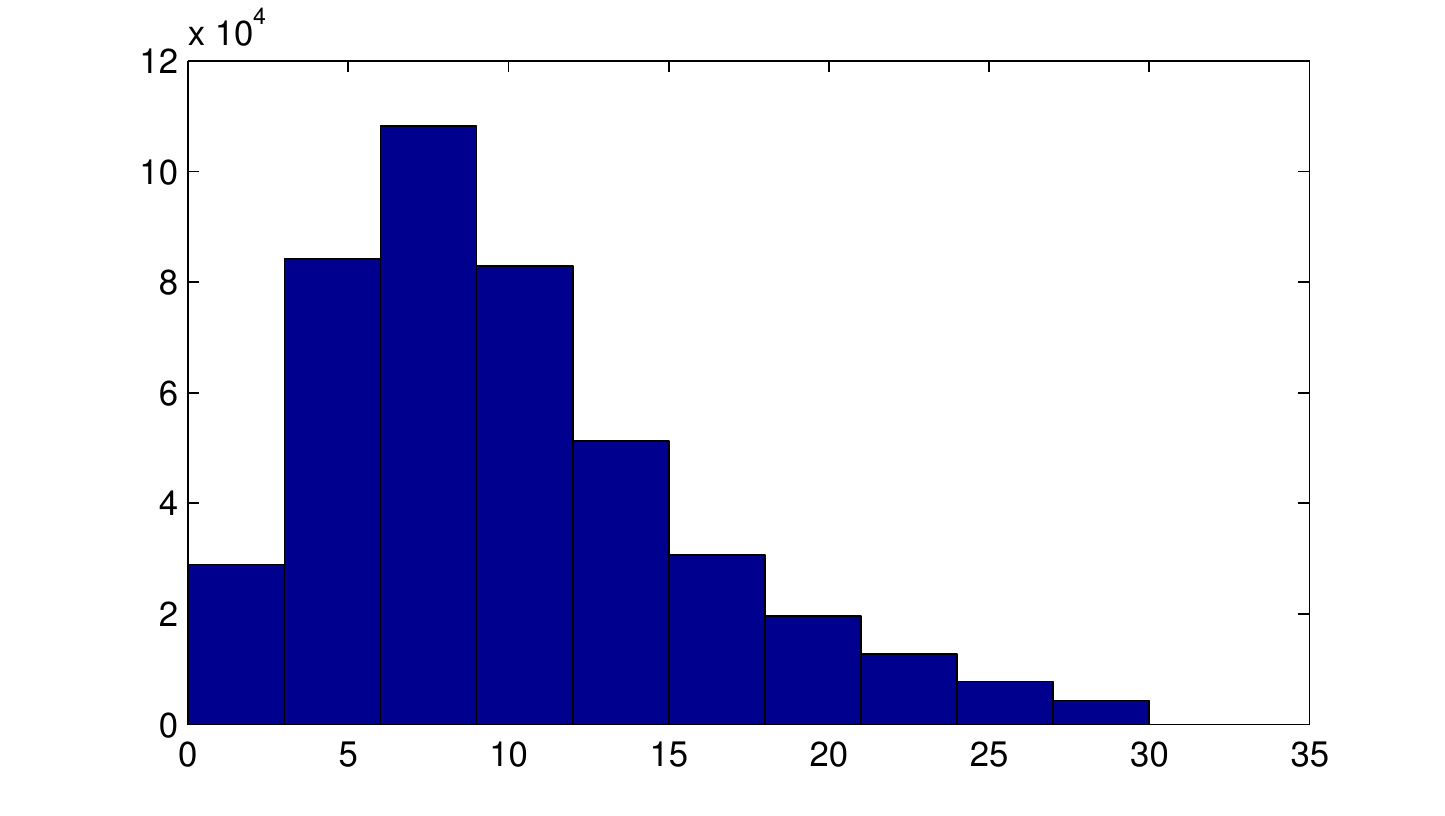}};     
	 \node[dots] (4) [below of = 2] {$\ldots$};
	 
	 \node[inode] (5) [below of = 1] {$z_1z_5???$};
	 \node[dots] (6) [below of = 3] {$\vdots$};     
     
     \node[inode] (7) [left of = 4] {$z_1 z_2???$};

     \path[] (0) edge[above] node {$z_1$} (2);
	 \path[] (0) edge[above] node {$z_9$} (3);  
	 \path[] (2) edge[above] node {$z_2$} (7);    
	 \path[] (2) edge[above] node {$z_5$} (5);
\end{tikzpicture}
\end{small}
\caption{
An example of a trip trie. Each node corresponds to a prefix which matches a subset of the trips with the symbol, ``?" meaning unknown as yet. The root of the tree contains all unknown and denotes the beginning of a trip, the next level time step 1 and so on. The user can store additional relevant data about the trips at each node, such as the trip durations, as shown in the histogram.}
\label{fig:tree_example}
\end{figure}

\section{Creating Trip Trie from GPS Data}
\label{sec:approach}
Here we describe the setting upon which our algorithm is built and detail the steps of constructing the trie structure. We assume our data is composed of trips where each trip consists of 
a sequence of GPS-located spatial temporal points $(x, y, t)$ where $x, y$ are the longitude and latitude, respectively, and $t$ is the time stamp when this sample is recorded. Note here we assume the time stamps are \emph{synchronized/identical} for each trip. In reality GPS tracks at irregular time intervals and later in experiments we will use interpolation/extrapolation.
Our approach consists of three major steps, shown in corresponding order in Figure \ref{subfig:map2symbol}, \ref{subfig:build-string} and \ref{subfig:trie}.

\begin{enumerate}
\item \textbf{Discretization of the geographic space}: This step breaks the modeled space into a finite set of distinct non-overlapping regions $Z = \{z_1, z_2, \dots, z_s\}$ which we will use as symbols in an alphabet (see Figure \ref{fig:build_string}). 
This preprocessing is carried out before the major algorithm is applied, just like most work in trajectory mining \cite{Giannotti2007} \cite{chattopadhyay2013joint}.
In our experiment we use equal-sized rectangular grids over the modeled space; this allows straightforward mapping between actual spatial coordinates to the symbols; however our method can be used with any shaped regions.
It is worth noting that the actual number of regions with activity is typically much smaller than the possible number of grid cells due to physical presence of roads. This is a desirable property allowing large geographic areas to be efficiently represented. For example, in our experiments, though we discretize the San Francisco Bay area into 10,000 cells, less than 20\% of them see any activity.

\item \textbf{Build trajectory strings}: In this step we build a trajectory string for each trip using symbols in $Z$. Note that we regard the beginning of each trip as time 0 and each position of the string indicates the location of the object at a particular time. For example, if we decide the temporal resolution is $t_r$, then a trajectory string of $z_4 z_5 z_8$ records the information that this trip starts (i.e. at its time 0) at region $z_4$, goes through region $z_5$ at time $t_r$ and ends at region $z_8$ at time $2 t_r$.  The temporal resolution is typically given by the devices' capture rate but as mentioned before we use a single resolution and construct strings accordingly through interpolation/extrapolation.
This encoding would produce a forest of tries as the initial starting locations may differ. However we can instead create a symbol for an artifical common starting point that occurs before the trips start. This allows a single trie to be created as there is a common root. 
%
Figure \ref{fig:build_string} gives an illustrative example of the result of such construction.

\item \textbf{Construct trie}: 
Once each trip is converted into a string as above (i.e. sequence of regions), we can construct a trie of these strings. It helps to note that position $i$ in a string indicates location at time step $i$ and hence the nodes at level $i$ (from top down) in the trie represent trips whose first $i$ positions are the same.
Figure \ref{fig:tree_example} shows an example of the resulting trie and what the nodes represent at each level.
In our experiments we implement simpler tries in MATLAB since it provides simple built-in functions and thus an easy way to reproduce the results. 
In more sophisticated implementation the user can store useful relevant information about the trips at each node such as the distribution of the trip durations shown in Figure \ref{fig:tree_example}.
We present our pseudo-code in Figure \ref{fig:overall}. 
\end{enumerate}

%
%

\begin{figure}[!h]
\begin{small}
\removelatexerror
\subfigure[Map longitude/latitude to symbol]{\label{subfig:map2symbol}
	\begin{algorithm}[H]
	\KwIn{Longitude/latitude coordinate $(x, y)$, bottom-left corner and top-right corner $(x_{min}, y_{min})$, $(x_{max}, y_{max})$, and grid size $(n_r, n_c)$}
	\KwOut{Region symbol $z$}
	Grid width $w \leftarrow (x_{max}-x_{min})/n_c$ \;
	Grid height $h \leftarrow (y_{max}-y_{min})/n_r$ \;	
	Longitude index $i_x \leftarrow \lceil (x-x_{min})/w \rceil$ \; 
	Latitude index $i_y \leftarrow \lceil (y-y_{min})/h \rceil$ \;
	Ordinal region index $i \leftarrow (i_y - 1) n_c + i_x$ \;
	$z \leftarrow $ symbol $z_i$ \; 	
	\vspace{0.05in}
	\end{algorithm}
}
\subfigure[Build trajectory string for a trip]{\label{subfig:build-string}
	\begin{algorithm}[H]
	\KwIn{A trip longitude/latitude sequence $(x_1, y_1), \dots, (x_k, y_k)$}
	\KwOut{String representation of the trip $s$}
	\vspace{0.05in}
	Initialize empty string buffer $s$\;
	\For {$i \leftarrow 1$ to $k$}{
		$z \leftarrow $ map $(x_i, y_i)$ to symbol using Algorithm \ref{alg:map}\;
		Append $z$ to $s$\;
	}
	\end{algorithm}
}
\subfigure[Construct trip trie]{\label{subfig:trie}
	\begin{algorithm}[H]
	\KwIn{Collection of strings $S = \{s_1, s_2, \dots, s_n\}$}
	\KwOut{Trie $T$}
	\vspace{0.05in}
	Initialize $T \leftarrow$ empty array of structs consisting of two fields: \textit{prefix} and 	\textit{index}\; 
	\For {$k \leftarrow 1$ to $\texttt{length\_of\_longest\_string}$}{
		$T(k).prefix \leftarrow $ unique strings in $S_k$\;
		$T(k).index \leftarrow $ mapping from index in $S_k$ to $T(k).prefix$\;
	}
	\end{algorithm}
}
\end{small}
\caption{
Pseudo-code for the overall construction of the trie.
In Algorithm \ref{subfig:map2symbol}, we convert the 2D grids to symbols as exemplified in Figure \ref{fig:build_string}.
In Algorithm \ref{subfig:trie}, the computations within the for loop (i.e. lines 3 and 4) can be carried out using the built-in primitive \texttt{unique}() in MATLAB and $S_k$ is defined to be the same collection of strings $S$ but with each string truncated at position $k$. }
\label{fig:overall}
\end{figure}

\section{On the Equivalence Between Trip Tries and Hierarchical Clustering}
\label{sec:property}
Here we make the claim that our top-down trip trie can be efficiently constructed in linear time and that it is identical to the result of standard bottom-up agglomerative hierarchical clustering with the following metric on pairs of strings.
\begin{equation}
d(s, s') = \sum_{i=1}^{l} 2^{(l-i)} \mathbb{I}[s^{(i)} \neq s'^{(i)}]
\label{equ:metric}
\end{equation}
where $\mathbb{I}$ is the 0/1 indicator function. This metric can be viewed as a weighted Hamming distance with the weight diminishing exponentially with the symbol position.

We divide this section into several parts. Firstly, we discuss why such a distance metric is useful, its interpretation and implication. We then prove that our method of constructing trie will generate this clustering and finally we present a brief complexity analysis.

\subsection{Interpretation of Dendrogram and Uses of Clusters}
Our string representation of trips effectively takes spatially and temporally irregular data and converts them into a string. The strings have a natural interpretation: the symbol at position $i$ is where the trip was at time $i$. Our string distance function above then effectively says two trips are more similar if they are \emph{initially} in the same locations.
This string representation also allows our dendrogram's levels to naturally explain the evolution of a trip.
We can then interpret the clusters in the following observations:

\begin{itemize}
\item The first level of the dendrogram contains a clustering of the trips based on their starting locations, the next level a refined clustering based on their starting locations \textbf{and} locations at time step $1$ and so on. 
\item Each path from the root to any node represents a cluster of common trips. 
\end{itemize}

Understanding what these clusters represent is critical to understanding how they can be used. Whereas in our earlier work \cite{kuo2015} a cluster of trips represented trips which started and ended in the same location/time, here we consider the \emph{trajectory} and the \emph{entire duration} taken to complete the trip. Therefore it is possible (and can be desirable in some contexts) that if two types of trips have the same start and end locations but are at different paces or slightly different routes (i.e. because some are completed during rush hour and others at night) they will appear in \textbf{different} clusters. 
Such type of clustering can be used in a variety of settings as follows:

\begin{itemize}
\item
\textbf{Next Movement Prediction.} A current trip can be quickly mapped to a node in the trie and the children of the node determine the likely next locations in the time step. 

\item \textbf{Diversity Route Understanding.} Consider trips between a start and an end locations (symbols). Such trips could appear multiple times in the tree (i.e. different nodes) due to different routes and different travel times. Counting how frequent this combination occurs gives a measure of diversity for the pair of start/end locations.

\item \textbf{Common Ancestors.} Consider two nodes. Their common ancestor represents a bifurication point for these trips. Locations which appear frequently in the tree are therefore naturally hub locations.
\end{itemize}

If these nuances between trips speed and trajectories producing different clusters is undesirable, then we propose a method to post-process the dendrogram (see section \ref{sec:other}). In that section we describe how efficient calculation of the Levenshtein distance between strings using a prefix tree allows us to combine clusters in the dendrogram to alleviate these nuances. 

\subsection{A Proof of Equivalence Between Dendrograms \& Tries}
\label{subsec:dendrogram}
Here we show that our trie can be naturally defined by a (threshold) dendrogram\footnote{We use a \emph{threshold} dendrogram to maintain only the ordering of the merges. A \emph{proximity} dendrogram also records the smallest distance for which a merge occurs (see \cite{Jain1988}).} that maps natural numbers to partitions of our finite set of \textit{distinct}\footnote{The assumption of distinct set makes the following definitions and explanation cleaner. In our experiments we don't need to discard duplicate trips as we record at each node (i.e. prefix) which trips have this prefix.}
 trajectory strings $\Sigma = \{s_1, s_2, \dots, s_n\}$. 
We append null symbols, $\emptyset$, to shorter strings so that all strings will have equal length, $l$. Accordingly a string $s$ can be written in its symbols as $s^{(1)} \dots s^{(l)}$.
Our aim here is to claim that our trip trie is identical to the result of single linkage hierarchical clustering on the trajectory strings with a specific metric in equation \ref{equ:metric}, which would require $O(n^2 \log n)$ computation time if done directly.
We will verify this claim empirically in the experimental section.
We achieve this aim with 3 steps.
\begin{enumerate}
\item Define equivalence relations from the prefixes such that an equivalence class forms a single node in one level of the trie and all equivalence classes form all clusters at one level in the trie.
\item Define our trip trie as a dendrogram; that is the function mapping from natural numbers (i.e. levels) to the set partitions of the strings ($\Sigma$).
\item Define a metric on the strings so that single linkage hierarchical clustering outputs our exact dendrogram above.
\end{enumerate}
We will follow the notations and definitions used in \cite{Carlsson2010} throughout the discussion.

\textbf{Equivalence Classes} We define equivalence relations $r_k$ on $\Sigma$ for each integer $k$ with $0 \leq k \leq l$: $s \sim_{r_k} s'$ 
if and only if $s^{(i)} = s'^{(i)} ~~ \forall i \leq l-k$; in other words, strings $s$ and $s'$ are equivalent if they share \textit{common prefix} of length $l-k$; it should be clear from the construction of a trie that all trips going through the same node at level $l-i$ are equivalent under relation $r_i$.
It is straightforward to verify that these are indeed equivalence relations.

\textbf{Dendrogram.} 
We formally define our dendrogram $\theta: \mathbb{N} \rightarrow P(\Sigma)$, a mapping from the natural numbers $\mathbb{N}$ to the partitions of our set of strings $\Sigma$ as follows.
\begin{enumerate}
\item 
$\theta(0) = \{\{s_1\}, \{s_2\}, \dots, \{s_n\}\}$. (i.e. the bottom level contains all singletons).
\item For each positive integer $i \leq l$, $\theta(i)$ contains the equivalence classes of $\Sigma$ under $r_{i}$. 
\item For $i > l$, $\theta(i)  = \{\Sigma\}$.
\end{enumerate} 
Simply put, in partition $\theta(i)$ each block contains strings that agree in the first $l-i$ positions.
To check that $\theta$ is indeed a dendrogram, we need to make sure $\theta(i)$ is a refinement of $\theta(i+1)$ for each $i$.
This can also be verified straightforwardly from the definitions of $r_i$: for any two strings $s$ and $s'$ such that $s \sim_{r_i} s'$, $s$ and $s'$ must share the same $l-i$ symbols by definition. Hence they must also share the same $l-i-1$ symbols (since $l-i-1 < l-i$) and it follows $s \sim_{r_{i+1}} s'$.
Our trip trie is in fact identical to this dendrogram; each level in our trie is a partition of all trips where each node is a block that contains all trips with the particular prefix denoted by the node. The leaf nodes correspond to $\theta(0)$. The level right above the leaves corresponds to $\theta(1)$, etc. and the root is $\theta(l+1)$.

\textbf{Metric on strings}.
Instead of constructing the trie as described, we could define a metric on the strings such that the standard single linkage agglomerative hierarchical clustering \cite{Carlsson2010,Jain1988} using this metric would result in the same dendrogram described above, that is, identical to our trip trie. 
Intuitively we are measuring the distance between two strings by the number of positions they differ where positions towards the start are weighted more. We want to weigh those positions to guarantee that two strings sharing \textit{only} the first position are still closer than another pair of strings that share \textit{all but the first} positions.
From this we define
\begin{equation}
d(s, s') = \sum_{i=1}^{l} 2^{(l-i)} \mathbb{I}[s^{(i)} \neq s'^{(i)}]
\end{equation}
where $\mathbb{I}$ is the 0/1 indicator function.

\begin{theorem}
$d$ is a metric on the space $\Sigma$.
\label{thm:metric}
\end{theorem}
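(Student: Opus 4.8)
The plan is to verify the three defining properties of a metric directly from the formula $d(s,s') = \sum_{i=1}^{l} 2^{(l-i)} \mathbb{I}[s^{(i)} \neq s'^{(i)}]$. Non-negativity is immediate since every term $2^{(l-i)}\mathbb{I}[\cdot]$ is non-negative. For the identity of indiscernibles, note that $d(s,s')=0$ forces every indicator term to vanish (a sum of non-negative terms is zero iff each is zero), hence $s^{(i)}=s'^{(i)}$ for all $i$, i.e. $s=s'$; conversely $d(s,s)=0$ trivially. Symmetry follows because $\mathbb{I}[s^{(i)} \neq s'^{(i)}] = \mathbb{I}[s'^{(i)} \neq s^{(i)}]$ termwise.

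The only non-routine property is the triangle inequality, $d(s,s'') \leq d(s,s') + d(s',s'')$, so I would spend the bulk of the proof there. The clean way is to prove it position-by-position: for each fixed $i$, I claim $\mathbb{I}[s^{(i)} \neq s''^{(i)}] \leq \mathbb{I}[s^{(i)} \neq s'^{(i)}] + \mathbb{I}[s'^{(i)} \neq s''^{(i)}]$. Indeed, if the left side is $0$ there is nothing to show; if it is $1$ then $s^{(i)} \neq s''^{(i)}$, so $s'^{(i)}$ cannot equal both $s^{(i)}$ and $s''^{(i)}$, whence at least one indicator on the right is $1$. (This is just the statement that equality of symbols at position $i$ is itself an equivalence relation, hence satisfies its own triangle inequality as a $0/1$ discrete metric on the alphabet.) Multiplying this inequality by the fixed positive weight $2^{(l-i)}$ and summing over $i=1,\dots,l$ gives the desired inequality for $d$. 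Since the weights $2^{(l-i)}$ are strictly positive, no sign issues arise in the summation.

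I expect no genuine obstacle here: the argument reduces to the elementary fact that a non-negative linear combination of (discrete) metrics is again a metric, and the only thing to be slightly careful about is phrasing the position-wise triangle inequality for the indicator function cleanly. It is worth remarking, though it is not needed for the theorem as stated, that the specific geometric weights $2^{(l-i)}$ are what make this metric induce exactly the trie dendrogram: the weight on position $i$ strictly exceeds the sum $\sum_{j>i} 2^{(l-j)} = 2^{(l-i)}-1$ of all later weights, so a single disagreement at an earlier position dominates any number of disagreements at later positions, matching the lexicographic-prefix structure used to define $\theta$. But for Theorem~\ref{thm:metric} itself, only the four metric axioms above are required, and each follows from the displayed formula by the reasoning sketched.
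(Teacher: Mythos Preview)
Your proposal is correct and follows essentially the same route as the paper: reduce the triangle inequality to the position-wise indicator inequality $\mathbb{I}[s^{(i)} \neq s''^{(i)}] \leq \mathbb{I}[s^{(i)} \neq s'^{(i)}] + \mathbb{I}[s'^{(i)} \neq s''^{(i)}]$, check it by the same two-case analysis, and then sum against the positive weights $2^{(l-i)}$. If anything, you are slightly more careful than the paper in explicitly verifying that $d(s,s')=0$ forces $s=s'$, which the paper leaves implicit.
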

\begin{proof}
It is straightforward to verify that $d(s,s) = 0$ and $d(s, s') = d(s', s)$. To prove that $d$ is indeed a metric on $\Sigma$, we show the triangle inequality: for any strings $s, s', s''$, $d(s, s'') \leq d(s, s') + d(s', s'')$.
Note by the definition of $d$ and the distributive property of multiplication, it suffices to show that for each $i$,
\begin{equation}
\mathbb{I}[s^{(i)} \neq s''^{(i)}] \leq \mathbb{I}[s^{(i)} \neq s'^{(i)}] + \mathbb{I}[s'^{(i)} \neq s''^{(i)}]
\end{equation}
We consider each of the two possible cases.
If $s^{(i)} = s''^{(i)}$, then the left side is 0 and the right side is either 0 (if $s^{(i)} = s'^{(i)}$, too) or 2 (if $s^{(i)} \neq s'^{(i)}$). Thus the inequality is satisfied.
On the other hand, if $s^{(i)} \neq s''^{(i)}$, then the left side is 1. Since $s^{(i)} \neq s''^{(i)}$, it is impossible that $s'^{(i)} = s^{(i)}$ and $s'^{(i)} = s''^{(i)}$ at the same time. Thus at least one of the two indicators on the right will output 1 and the right side will be at least 1; the inequality holds in general.
\end{proof}

\textbf{Remark}
To have agglomerative hierarchical clustering result in exactly the same dendrogram as our trie, we must allow more than two clusters to merge at each step (if the linkage distance between them is the same). 

\subsection{Run Time Complexity}
\label{subsec:runtime}
Here we give a brief analysis of the run time of our proposed algorithm.
The first step of spatial discretization defines a simple mapping from longitude/latitude coordinates to a finite symbol space. With our equal-sized rectangular grids, this mapping can be computed in constant time (for each coordinate) and thus the overall complexity of the discretization and the construction of string representation for all trips is $O(nl)$ where $n$ is the total number of trips and $l$ is the length of the longest string, which is determined by the duration of the trip and the chosen temporal resolution. 
From the pseudo-code in Figure \ref{fig:overall}, we can easily see that construction of the trie requires invoking \texttt{unique} $l$ times. Since straightforward implementation of \texttt{unique} takes $O(n)$ time, the overall complexity of building the tree is $O(nl)$ too. This is the same as the standard trie implementation using linked nodes. 
It is worth noting that the distributed implementation of prefix tree already exists in popular softwares (e.g. Apache HBase) and thus could scale to huge data sets.
In comparison standard hierarchical clustering algorithms require $O(n^2 \log n)$ time, which becomes prohibitively demanding when $n$ is as small as $\sim 10,000$. Later in our experiments, we build a trip trie of $ \sim 430$K trips (strings) in 100's of seconds.

\section{Other Uses of Trip Tries}
\label{sec:other}

The literature on tries and trees is significant and many useful properties have been derived. Here we describe how  to use two such properties: i) The ability to efficiently compute Levenshtein distances from tries and ii) Outlier score computation from trees.

\subsection{From Micro to Macro Clusters With Relaxed Distance Calculations}
The distance function that our method effectively uses (see equation \ref{equ:metric}) is quite strict. Two trips with overwhelmingly similar trajectories and speed may be placed in different clusters if the differences between the trips are towards the beginning of the trip. This may be desirable in some settings but in others more tolerance of these slight differences may be required.
Here we discuss how this can be achieved by exploiting that the prefix tree can be used to efficiently calculate the Levenshtein distance \cite{chou2003} between strings. Suppose we choose a level $l$ of our dendrogram we term all clusters at that level micro-clusters. In our experiments there are upwards of 10,000 such micro-clusters at level 20, each representing a unique path (i.e. trip type) through the space. We can group together these micro-clusters based on their Levenshtein distance. 

The Levenshtein distance between two strings is the number of operations (\texttt{insertion, deletion, substitution}) so that the two strings are the same \cite{levenshtein1966}. Therefore two strings $z_1 z_2 z_3 z_4$ and $z_2, z_2, z_3, z_4$ will be in different clusters in our hierarchial clustering but could be grouped together as their Levenshtein distance is just 1 (\texttt{substitution} at position 1). The Levenshtein distance is suitable for forming these clusters based on the micro-clusters as slight variations in trajectories can be overcome via the \texttt{substitution} operation and slight variations in speed can be overcome with the \texttt{insertion} and \texttt{deletion} operations. Using clustering objectives such as minimizing maximum cluster diameter will produce clusters with a strong semantic interpretation: a diameter of $q$ means all trips have at most $q$ differences in speed and route.

\subsection{Outlier Scores}
In this paper we mainly explore the interpretation and use of our trip tries as a clustering tool. However we introduce a different view on the trip trie here. A key to understanding this interpretation is that our discretization of space into a grid defines a state space where each state is represented by a symbol in the alphabet. Every root-to-leaf path is then a unique trip type and all root-leaf combinations represents all unique trips through the state space that occurred in our data set. Note this definition of unique trips (due to our distance metric) implies that trips starting and ending at the same location but using different routes and/or paces will be considered different.

The notion of Isolation Forest \cite{liu2008,liu2012} provides a method of identifying outliers as being those entities far from the root of the tree. We can use our trip tries to achieve a similar purpose. 
The frequency of a symbol (which represents a location) in the trie is an indication of its outlier score with respect to how many unique trip types involve it. If location $z_1$ occurs twice as much as $z_2$, then the former is involved in twice as many unique trips as the latter. This is visualized in Figure \ref{fig:from_sfo} for trips starting at San Francisco airport.
The depth of each symbol in the tree is an indication of how often it is used. If some measure of depth of $z_1$ is twice as large as $z_2$, then the former is much less prevalent at the beginnings of trips than the latter. This is visualized in Figure \ref{fig:loc_occur_order} where we consider first appearance in the tree though the mean level of occurrence could also have been used.

\section{Experiments}
\label{sec:experiment}
We have choosen experiments to demonstrate the usefulness of our proposed approach on a real world data set of freely avaialable GPS traces of taxis\footnote{Data set can be downloaded, after registration, at \texttt{http://crawdad.org/epfl/mobility/}.} \cite{cabspottingdata}.
In particular we focus on the following questions:
\begin{itemize}
\item Do results from a \texttt{MATLAB} agglomerative hierarchical clustering algorithm and our trie actually agree with each other? (verifying Theorem \ref{thm:metric}).
\item Is the distance metric we implicitly use useful? We address this by investigating properties of hierarchical clusterings built from different time periods to see if the insights found make sense (see Table \ref{tab:tree_stats}).
\item Can we explain clusters using our dendrogram's spatial and temporal interpretation? For example, where are the most common trip trajectories? Where do trips starting at a particular region go? 

\end{itemize}

We start with the description of our data set and the processing step we use to extract information of trips for reproducibility\footnote{All our code will be made publicly available for replication of our results upon acceptance.}.

\subsection{Data Description and Preparation}
\label{subsec:data}
The raw data set contains GPS traces of 536 taxis from Yellowcab in San Franciso bay area during a 24-day period in 2008. Each trace file corresponds to one taxi and consists of recordings of the latitude, longitude, a customer on/off flag and the time of recording (i.e. \texttt{37.75134, -122.39488, 0, 1213084687}). We generate a trip by searching for contiguous values of '1' for the customer-on flag for each taxi. Overall a total of $438145$ trips are extracted.
Since the vast majority of the taxi trips are short in duration and we are more interested in analyzing local traffics, we decide to study those trips whose trip time is $\leq 30$ minutes. These trips in fact account for $98.3\%$ of all trips extracted from the data set.
In addition, we pick our temporal resolution to be 1 minute. That is, in our string representation, consecutive symbols are the regions recorded at 1 minute apart. Assuming an average speed of $\sim 35$ miles per hour in the city, a taxi moves $\sim 0.583$ miles every minute. Accordingly we partition our modeled space into a $100 \times 100$ grids where each grid cell has a corresponding geographic dimension of $0.55$ miles $\times 0.54$ miles, which means adjacent symbols in a string are more likely to be different.

\subsection{Question 1: Equivalence of MATLAB Hierarchical Clustering and Our Approach}
Here we verify our claim (in section \ref{sec:property}) that the trie constructed from our approach is the same as the dendrogram output by standard agglomerative hierarchical clustering with metric defined in equation \ref{equ:metric}.
We constructed our trie and compared it with the results from the built-in single linkage clustering from MATLAB (i.e. \texttt{linkage}, followed by \texttt{dendrogram}) and check their equivalence. 
For any given level we can verify the two clusterings from our trie and the \texttt{MATLAB} dendrogram are identical up to reordering/relabeling with Algorithm \ref{alg:verify_same_cluster}.
If this is true for all levels, then we conclude the equivalence between our trie and the \texttt{MATLAB} dendrogram.
Note standard agglomerative clustering needs to compute and update distances between each pair of instances, which is prohibitive when the number of instances is large as in our case (in fact, standard agglomerative clustering cannot even handle moderate-sized data sets; see \cite{Gilpin2013cikm}). 
Therefore, we draw a random subset of 1000 trips and compare our trip trie built on them with the dendrogram output by agglomerative hierarchical clustering with string metric in equation \ref{equ:metric}. We repeat this for 10 random samples and in each case the dendrograms produced by MATLAB and our method are identical.

\begin{algorithm}[!h]
\begin{small}
\KwIn{Two lists $c_1, c_2$ of length $k_1$ and $k_2$, respectively, indicating which cluster an instance is in $1, 2, \ldots$}
\KwOut{\textit{true} if $c_1$ and $c_2$ are identical up to relabeling; \textit{false} otherwise}
$k_1 \leftarrow \texttt{max}(c_1)$\;
$k_2 \leftarrow \texttt{max}(c_2)$\;
\If{$k_1 \neq k_2$}{return \textit{false}\;}
\For{$i \leftarrow 1, \dots, k_1$}{
	$\textit{index} \leftarrow c_1 = i$\;
	$\textit{label} \leftarrow c_2(index)$\;
	\If{$\texttt{length}(\texttt{unique}(label)) \neq 1$}{return \textit{false}\;}
}
return \textit{true}\;
\end{small}
\caption{Verify whether or not two clusterings are identical up to relabeling of cluster numbers.}
\label{alg:verify_same_cluster}
\end{algorithm}

\subsection{Question 2: Usefulness of Hierarchies - Comparing Different Clusterings}
\label{subsec:tree_stat}
Here we demonstrate the usefulness of the hierarchies we build by exploring the properties of different dendrograms built from different groups of data. We categorize trips into distinct groups and construct a dendrogram for each of the groups. 
We are interested to know if hierarchical clusterings (dendrograms) built from distinct subsets of the trips exhibit any differences in their properties and if these differences make sense.
We use the start times of  trips to extract four subsets of the data: \emph{day peak} (5 AM - noon), \emph{night peak} (3 PM - 10 PM), \emph{weekdays} and \emph{weekends}. Note that these subsets are not the same in size and not mutually disjoint either hence we report the average of these properties. 



\textbf{Measures of Diversity.} 
One interesting characteristics of the trips is their diversity which can be measured in two ways. The branching factor of the dendrogram at each level provides a measure of \textit{dispersion} of the trips: if a dendrogram has large branching factors on average, then trips have more different trajectories in general. In Table \ref{tab:tree_stats} (line 2 and 3) we see that weekend trip dendrogram have a slightly higher average branching factor. 
A second measure is the number of times a region appears in the dendrogram divided by the number of clusters. If this number is significantly higher for one dendrogram than another then it means there are more diverse routes in the former dendrogram. Table \ref{tab:tree_stats} (line 4) shows night time has more diverse routes than day time and weekdays more than weekends.
Finally, Table \ref{tab:tree_stats}  (lines 6-7) report the average number of trips per cluster and we find as expected there are bigger clusters for the nighttime and weekday trips. 

\begin{table*}[!th]
\centering
\begin{small}
\begin{tabular}{|l|c|c|c|c|}
\hline
& Day peak & Night peak & Weekdays & Weekends\\
\hline
Total number of trips &  96582   &   161355   &   282545   &   148159 \\
Level-wise Average branching factor & 1.3234  &  1.3352  &  1.3531 &  1.3361  \\
Level-wise Average branching factor (first 11 levels) & 1.8802 &   1.9108  &  1.9598  &  1.9145 \\
Average number of clusters per region &  191.9070 & 251.0469 & 304.4685 & 238.3682 \\
Average number of clusters per region (first 11 levels) &  89.0493 & 113.2751 & 151.1392 & 113.0555 \\
Average number of trips per cluster &  12.5902 &  19.1384 &  23.6316 &  16.8952 \\
Average number of trips per cluster (first 11 levels) & 33.3085 &  51.5930 &  64.1925 &  45.3436 \\
\hline
\end{tabular}
\end{small}
\caption{
Statistics of the dendrograms built from different categories of trips. In our categories, ``Day peak'' and ``Night peak'' include trips whose start times are from 5 AM to noon and from 3 PM to 10 PM, respectively.}
\label{tab:tree_stats}
\end{table*}

 
\begin{figure*}[!th]
\begin{center}
\subfigure[Start]{\includegraphics[width=0.25\textwidth]{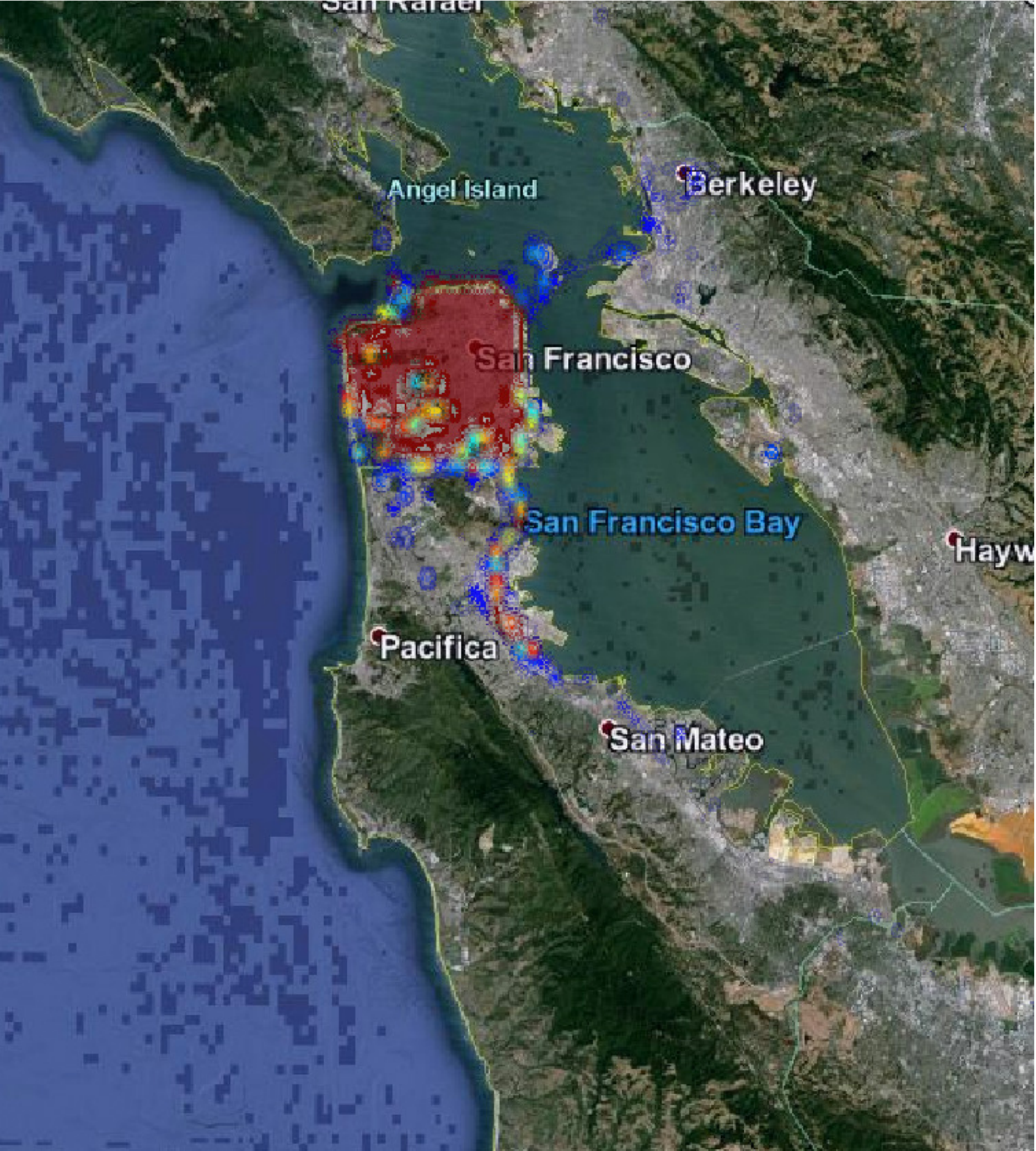}}\label{subfig:heatmap_start}
\quad
\subfigure[10th minute]{\includegraphics[width=0.25\textwidth]{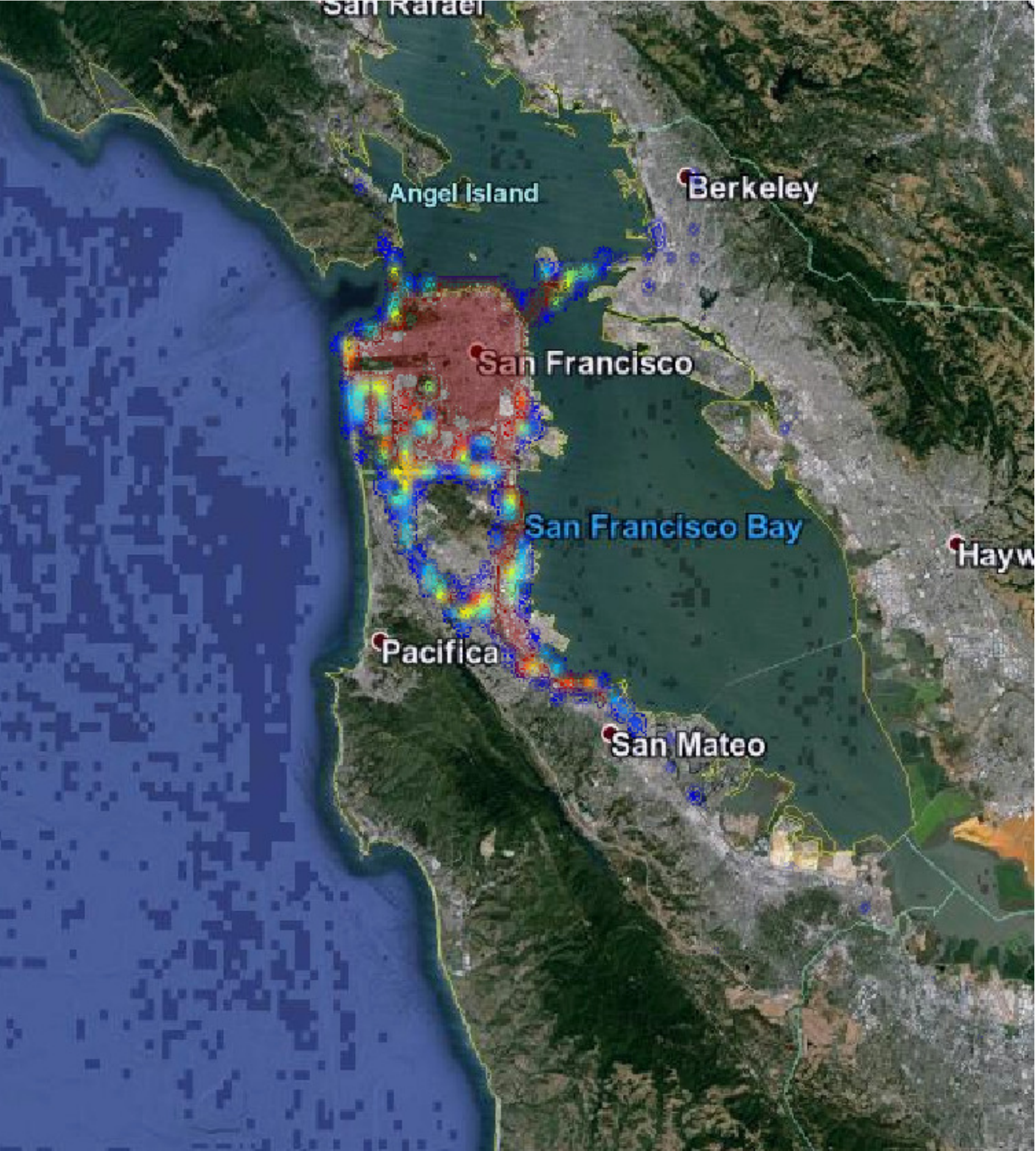}}\label{subfig:heatmap_10min}
\quad
\subfigure[20th minute]{\includegraphics[width=0.25\textwidth]{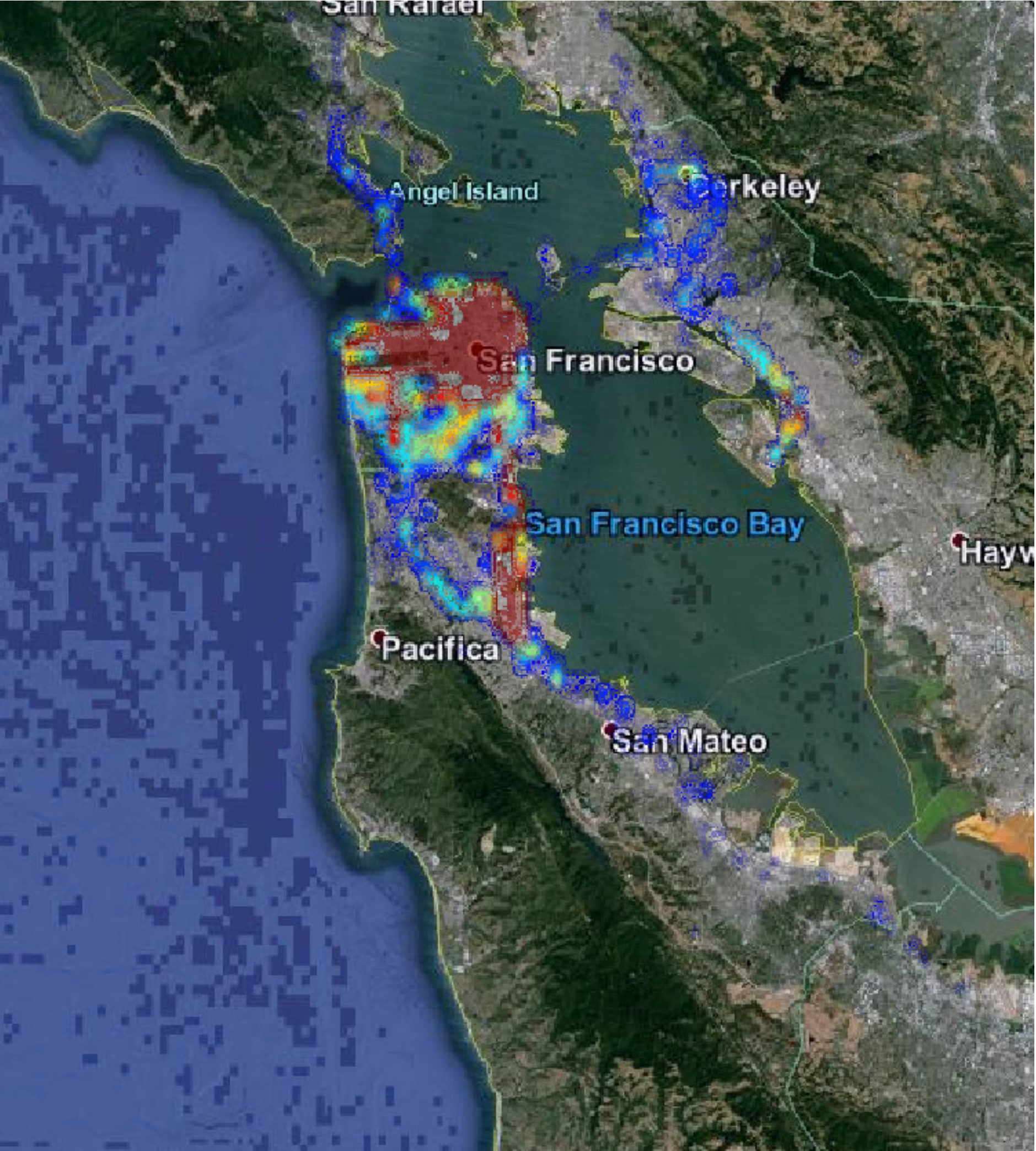}}\label{subfig:heatmap_20min}
\end{center}
\caption{Heat maps on locations of the trips at the start of trips (i.e. pickup), 10th minute and 20th minute after start.}
\label{fig:heatmap}
\end{figure*}

\begin{figure*}[!th]
\begin{center}
\subfigure[First cluster]{\includegraphics[width=0.25\textwidth]{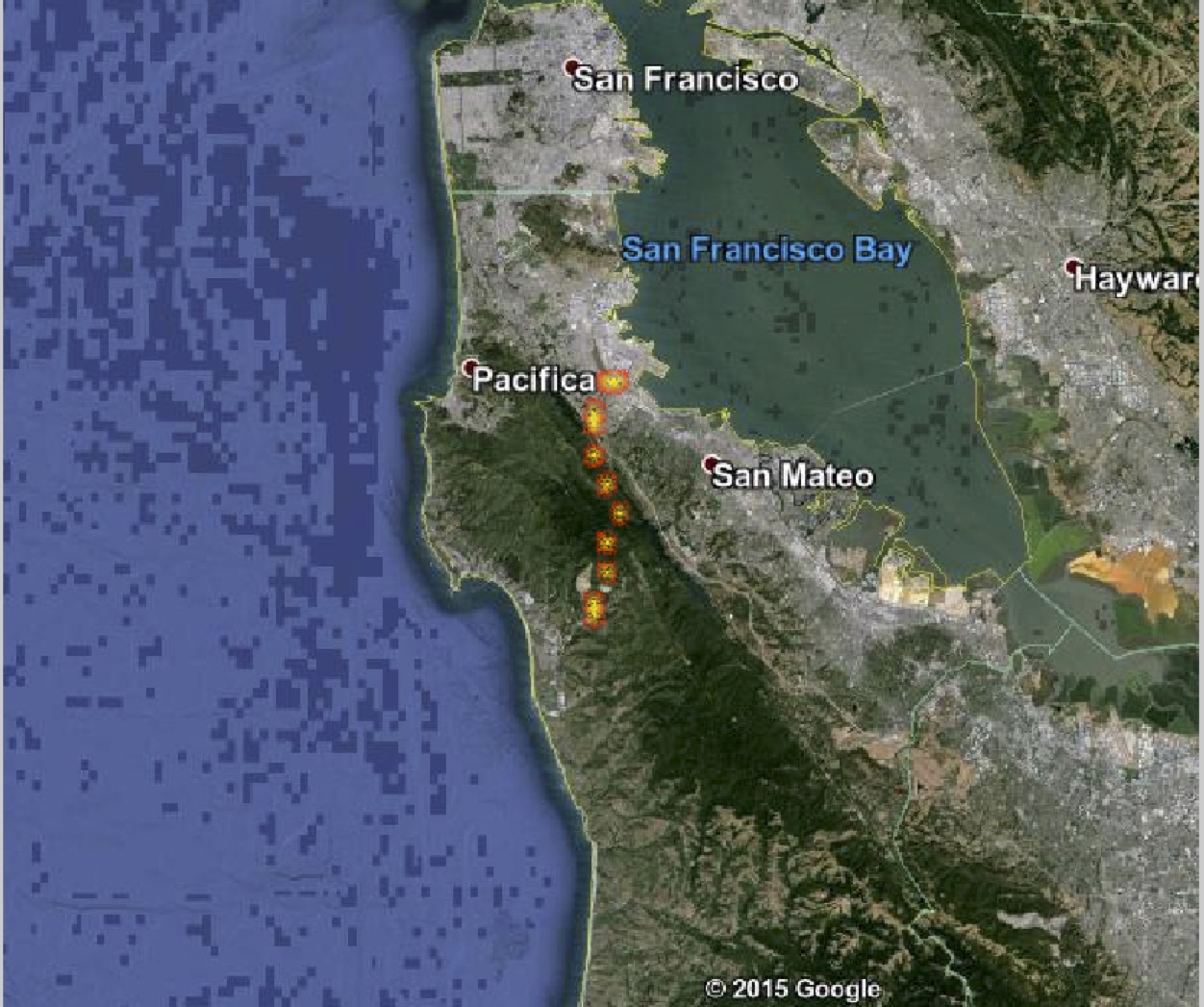}}\label{subfig:cluster1}
\qquad
\subfigure[Second cluster]{\includegraphics[width=0.25\textwidth]{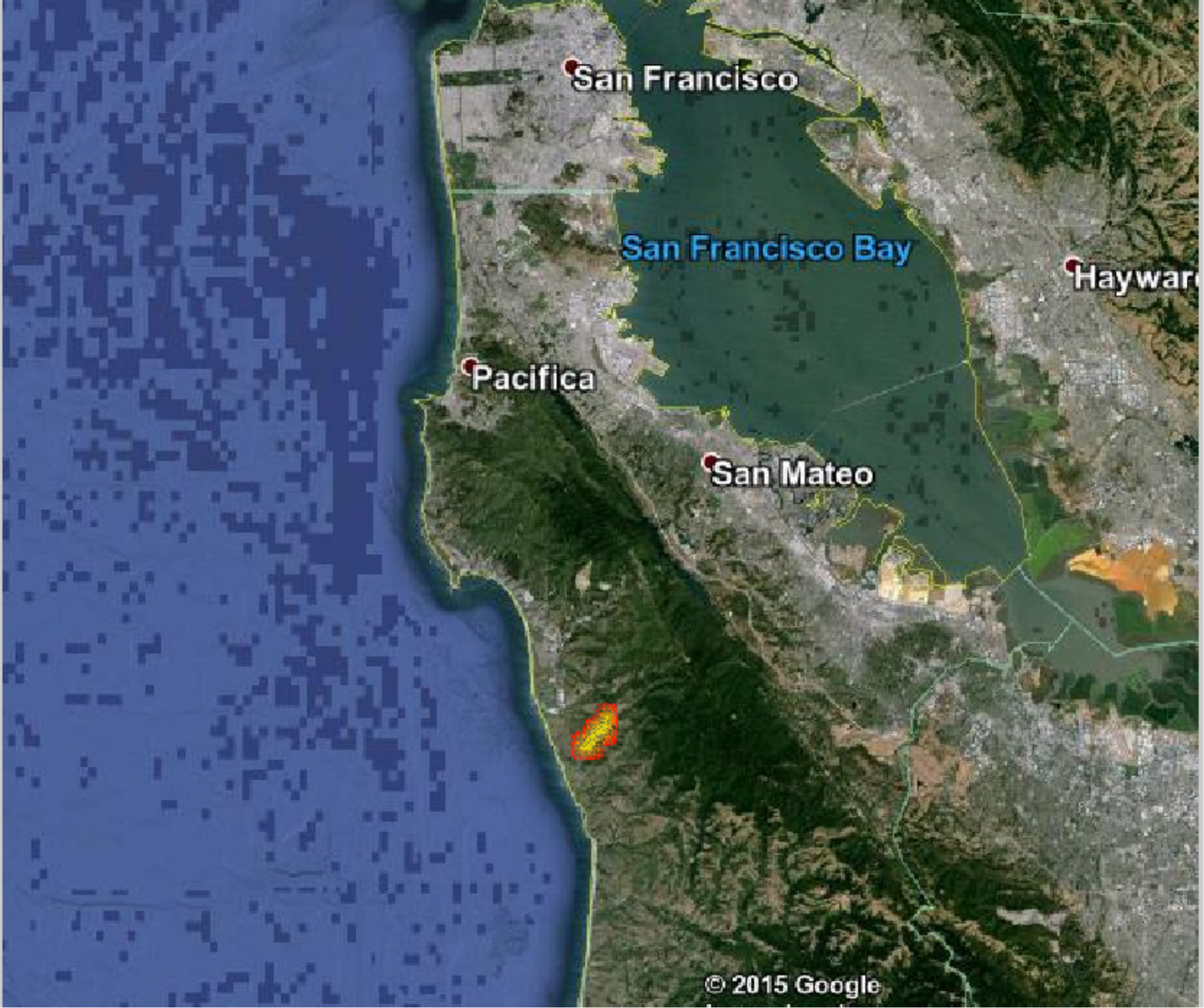}}\label{subfig:cluster2}
\qquad
\subfigure[Third cluster]{\includegraphics[width=0.25\textwidth]{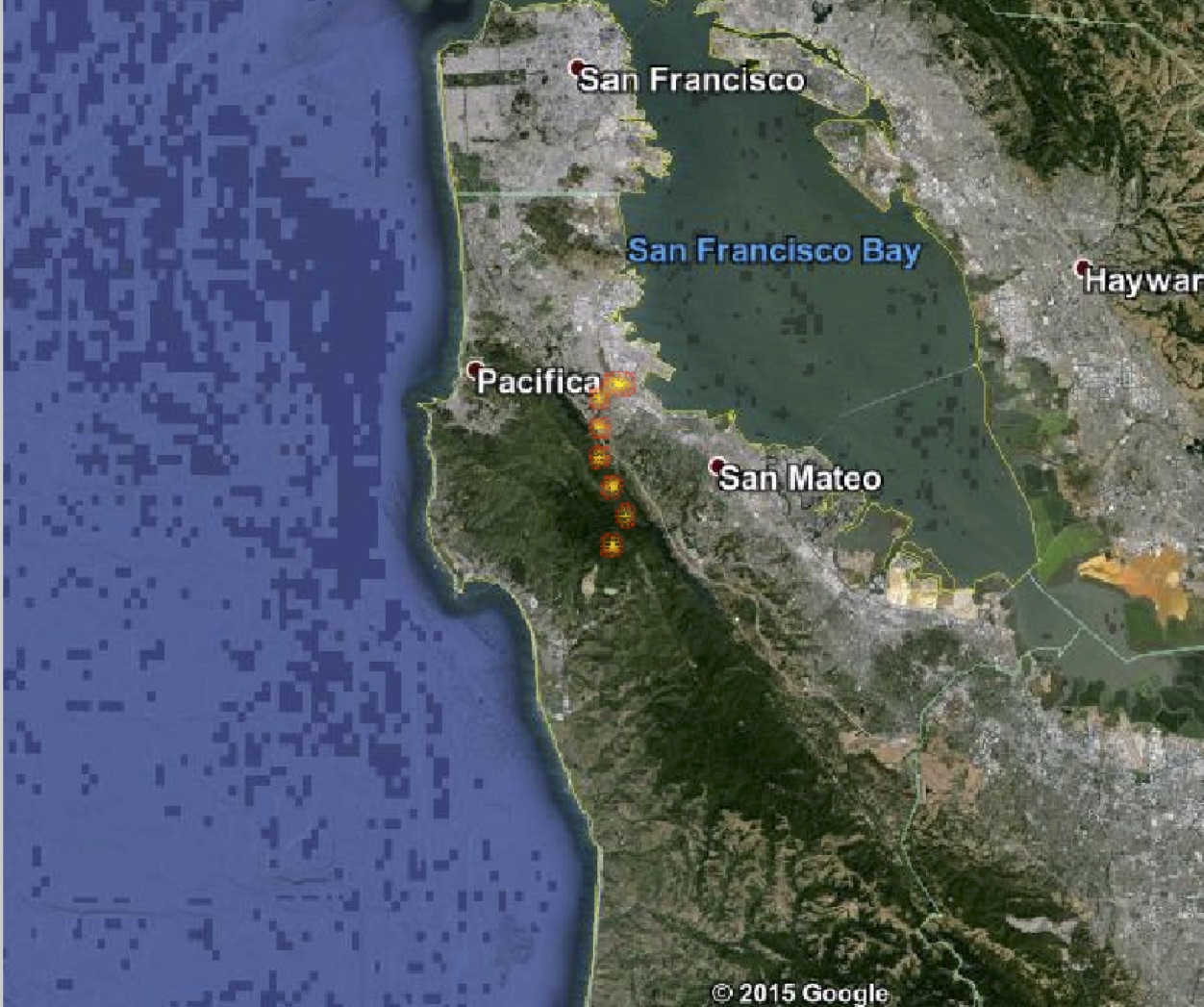}}\label{subfig:cluster3}
\end{center}
\caption{Three clusters at depth 11 in our dendrogram visualized as a trajectory. Compare with Figure \ref{fig:heatmap}.}
\label{fig:top_clusters}
\end{figure*}

\begin{figure*}[!th]
\begin{center}
\subfigure[Heat map of ``order of occurrence''. Notice that for the purpose of more intuitive display, We associate regions appearing early in trips with larger numbers (red) and regions appearing later with smaller numbers (blue).
]
{\includegraphics[width=0.25\textwidth]{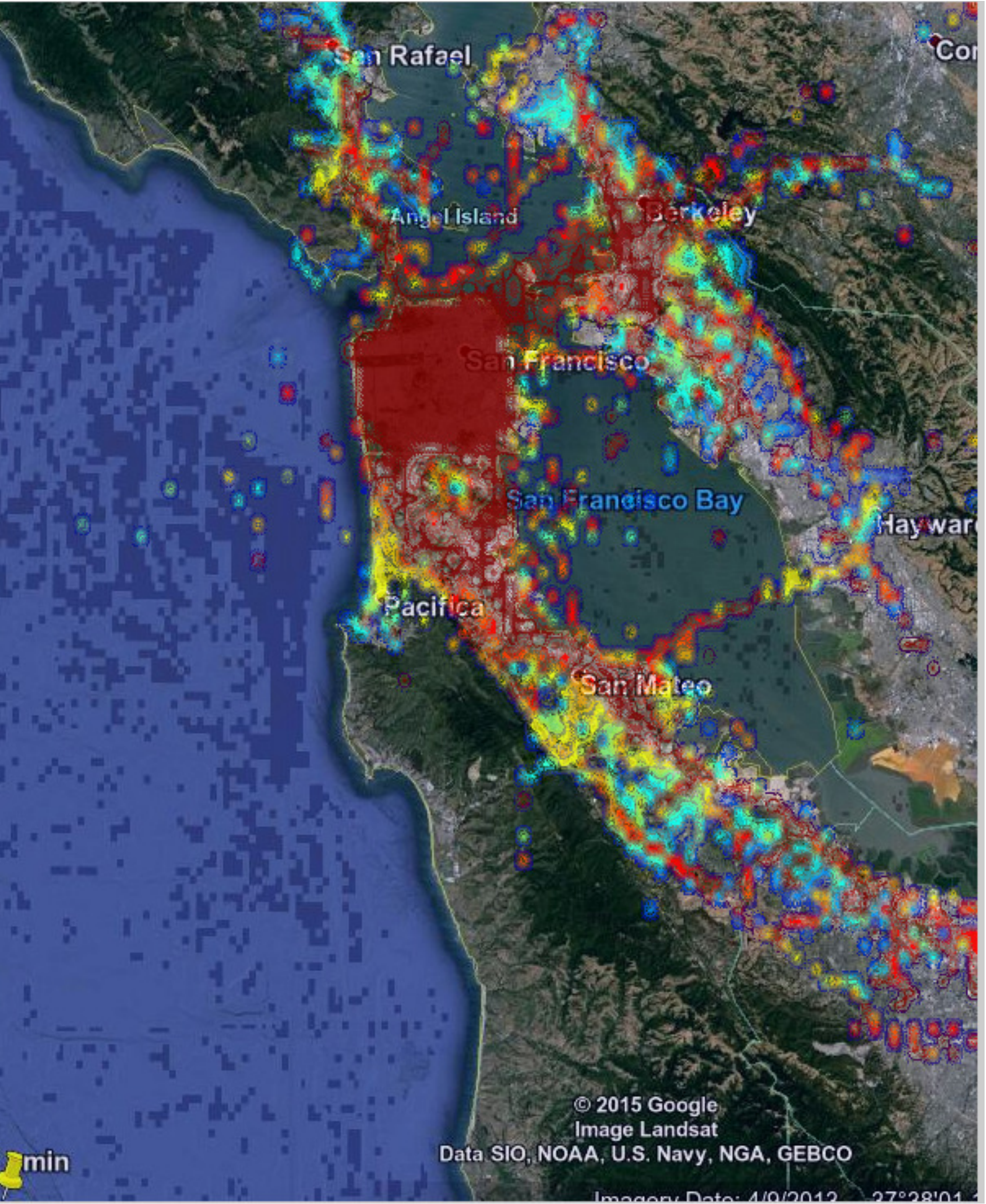}} \label{fig:loc_occur_order}
\qquad
\qquad
\subfigure[10 most frequent appearing regions (blue) at 10th minute after start for trips starting at the San Francisco airport (red).]{\includegraphics[width=0.3\textwidth]{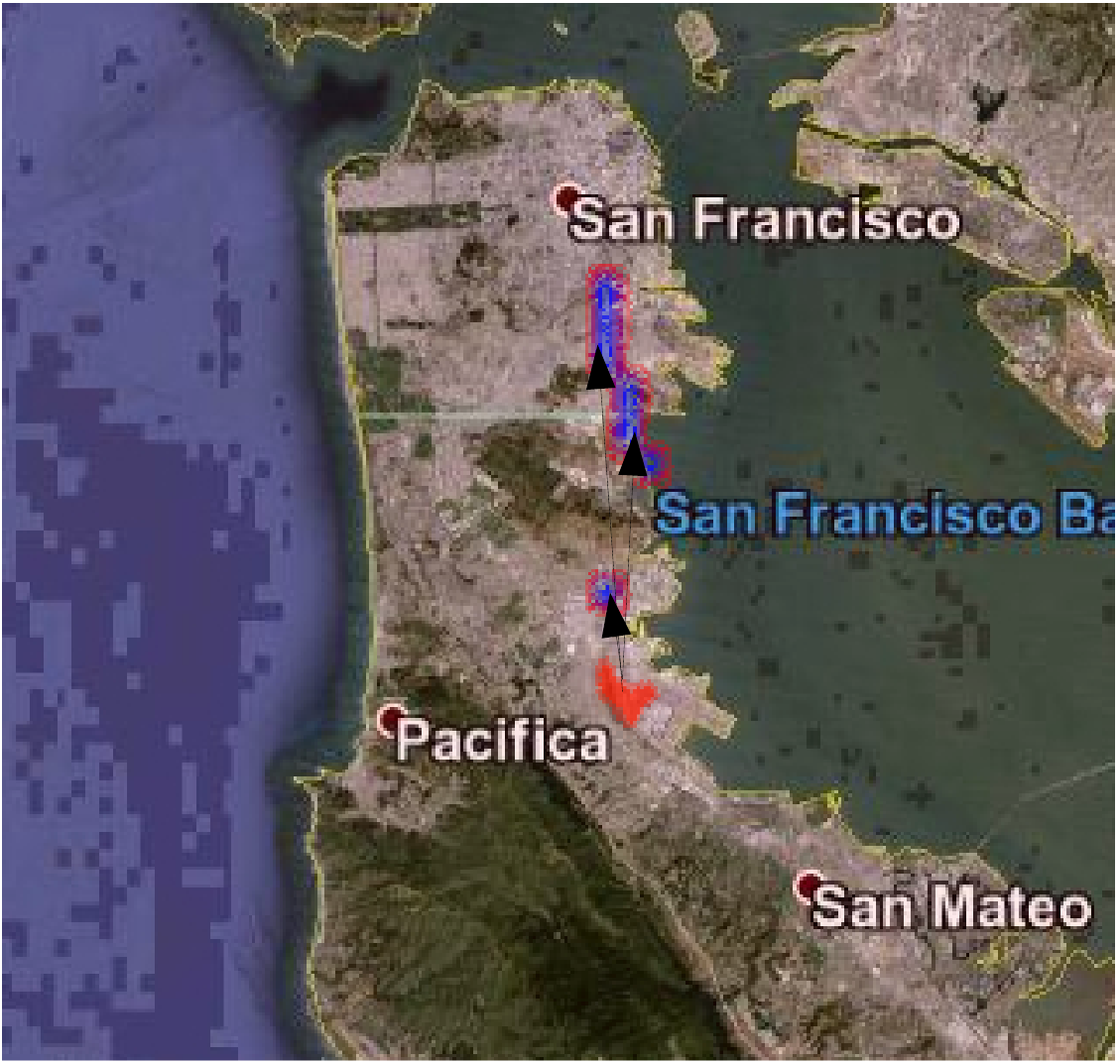}\label{fig:from_sfo}}
\end{center}
\caption{Some interesting uses of the trie to examine either information about the regions or a subset of trips.}
\label{fig:trie_more_use}
\end{figure*}

\subsection{Question 3: Temporal Interpretations of our Hierarchies}
\label{subsec:movement}
Each dendrogram has a natural temporal interpretation, the clustering at level $i$ is where the trips are at time step $i$. Here we explore finding high level movement patterns of the taxis using our hierarchy. 
Some natural questions regarding such movements are ``where do the trips start?'' and ``what paths are taken by the trips?''.

\textbf{Visualizing Frequencies of Regions/Symbols at Different Levels.} 
At each level of our dendrogram we will have a certain number of clusters each with a differing amount of trips. We can count the total number of trips through each distinct region/symbol across all clusters.
These counts can be represented as a heat map for a given time/level and these heat maps provide information on how taxis in general move about in the city. 
Figure \ref{fig:heatmap} shows 3 selected heat maps constructed from our dendrogram.
Figure \ref{subfig:heatmap_start}, \ref{subfig:heatmap_10min} and \ref{subfig:heatmap_20min} show the densities of taxis' locations at the start of the trips, 10th minute and 20th minute after the start, respectively.
From these figures, we can see that San Francisco downtown area always has traffic but there are general movement patterns radiating outwards from the area too. The most obvious interpretation is that these taxis carry customers from the more central area towards the north, east and south.



\noindent
\textbf{Visualizing Clusters as Trajectories.}
In our dendrogram each cluster at  level $i$ can be naturally interpreted as a partial trip/trajectory upto time $i-1$. Hence for any level we can easily visualize the most largest clusters as a trajectory. 
Figure \ref{fig:top_clusters} shows the top 3 largest clusters (partial trajectories) of our dendrogram at depth 11 (i.e. 10th minute after start of the trips). This is an interesting contrast to Figure \ref{fig:heatmap} which showed most trips originating in the downtown area. We find that after 10 minutes those trips have dissipated in so many directions that the most frequent trips do not include any originating in the downtown.

\textbf{Visualizing Regions across Clusterings.}
Often we are interested in the  order which regions appear in the trajectories. Are some regions more likely to be at the beginning of a trip or the end? 
For example the regions where people often start their trips will appear on the first level of our dendrogram (i.e. crudest clustering) and the places that only appear as dropoff points will appear at the leaf level (i.e. finest clustering). 
Accordingly we can construct a map of all regions and associate each region with the depth of the dendrogram in which it \emph{first} appears in \emph{any} cluster.
This can be efficiently computed from our dendrogram by finding the level which a symbol/region first appears. 
Figure \ref{fig:loc_occur_order} shows a heat map of this ``order of occurrence'' map. This map can be interpreted that those areas in red appear often at the start of a trip and those in blue appear towards the end of a trip.
We can see the regions in Downtown/Berkeley/Oakland appear in the crudest clusterings whereas regions farther away from road segments often occur in much finer clusterings only.

\textbf{Visualizing Refinement of a Clustering.}
Our dendrogram  models how a clustering (at a level) is refined. Some clusters may split into many new clusters at the next levels while some others split into fewer clusters and others even stay the same.
In our dendrogram, the split of a cluster in a refinement (i.e. next level) depends on the next regions the trajectory goes.
Often we are interested to look into the subset of trips that start from a particular region and see where they go. 
We can obtain such information by first choosing the cluster associated with the given start region (at level 1), and then following all the new clusters that are split from it in the refinements. This corresponds to the sub-tree (also a dendrogram) rooted at a particular cluster (at level 1) in the dendrogram.
Figure \ref{fig:from_sfo} shows the most frequently appearing regions at the 10th minute for those trips that start at the San Francisco airport. This could potentially be used as a predictive tool providing the distribution of end regions for all past trips starting at particular regions.

\begin{table*}[!th]
\centering
\begin{small}
\begin{tabular}{|c|c|}
\hline
Tasks & Time (seconds) \\
\hline
Extracting trips & $\sim 50$ \\
Constructing string representations & $\sim 650$ \\
Constructing trie & $\sim 60$ \\
Calculating trie statistics (section \ref{subsec:tree_stat}) & $\sim 0.5$ \\
Generating movement (or clusters) heat maps (section \ref{subsec:movement}) & $ 50 \sim 200$ \\
Generating region occurrence heat map (section \ref{subsec:movement}) & $\sim 0.5$ \\
\hline
\end{tabular}
\end{small}
\caption{Example run times for all parts of trie construction and various experiments on the entire data set of $\sim 430000$ trips.}
\label{tab:runtime}
\end{table*}

\subsection{Run Time}
\label{subsec:exp_runtime}
In earlier sections we show (by complexity) that our trip trie is much faster to construct over standard aggplomerative hierarchical clustering. Here we document the actual run time taken for each of the experiments described above in Table \ref{tab:runtime}.
All experiments were performed using MATLAB version 7 on Intel(R) Core(TM) i7-2630QM CPU @ 2.00GHz (no parallel computation was implemented and thus the presence of multi-cores is of little relevance).
The run times as presented in Table \ref{tab:runtime} are very reasonable for a data set of roughly half of a million instances. Given that our method scales linearly and the existence of modern parallel implementation of tries, we expect it is straightforward to apply our methods to much larger data sets of tens of millions with relatively little effort.

\section{Related Work}
Our work touches upon several areas of related work: hierarchical clustering,  spatial temporal mining, trajectory  mining and tree structures for trajectory data. To our knowledge the idea of building prefix trees to efficiently compute dendrogram structures is novel. The hierarchical clustering of large scale trajectory data sets is also novel as hierarchical clustering methods do not readily scale.

\label{sec:related}

\textbf{Hierarchical Clustering}
As we mentioned in the introduction, standard hierarchical clustering outputs a dendrogram and there is well known result that a dendrogram can be equivalently represented as an ultrametric through a canonical mapping \cite{hartigan1985,Carlsson2010}. Some work \cite{murtagh2008} also pointed out links between prefixes and ultrametrics when trying to increase ``ultrametricity'' of data through data recoding/coarsening.
Our current work makes the equivalence between a prefix tree and a dendrogram more formal, both analytically and empirically, and provide one metric between pairs of strigns by which standard hierarchical clustering outputs a dendrogram identical to our prefix tree.

\textbf{Spatial temporal mining:}
Spatial temporal data mining has been more recently studied  partially due to the emergence of cheap sensors that can easily collect vast amounts of data.
The spatial temporal nature of the data adds multiple challenges not handled by many classical data mining algorithms, such as discretization of continuous dimensions, non-independence of samples, topological constraints, visualization of the discovered results, and many more \cite{rao2012,andrienko2006,IJCAI09,AAAI10,KDD13,SDM13}.
Our  paper analyzes a particular form of spatial temporal data with a specialized data strucuture and we discuss some related work along this line.


\textbf{Trajectory data indexing and retrieval:}
There exists a body of literature deals with the storing, indexing and retrieval of large trajectory data sets \cite{dittrich2009,chakka2003,cudre2010}. Due to the spatial temporal nature, different static and dynamic data structures were proposed and explored, such as quadtree \cite{cudre2010}, 3D R-tree \cite{chakka2003,guttman1984}, etc. This line of work primarily focused on efficiently performing tasks common in databases, such as retrieving and updating data records.
Although data storage and updating are necessities of almost all data structures as well, our current work, on the other hand, is more focused on insightful and actionable discovery and summarization from a large collection of GPS trip data.

\textbf{Tree structures in spatial temporal data mining:}
Probabilistic suffix trees were used to mine outliers in sequential temporal database \cite{sun2006}. Tree-based structures were also studied and employed in optimizing queries and computations in large spatial temporal databases \cite{mouratidis2008,tao2003}.
Another work \cite{Monreale2009} attempted to predict the next locations of the given (incomplete) trajectories by first extracting frequent trajectory patterns (called T-patterns from \cite{giannotti2006}) and built a specialized prefix tree where the nodes are the (pre-determined) frequent regions and the edges are annotated with travel times. Next locations in trajectories are then predicted using association rules.
Our current work is meant to provide a exploratory analysis of the trips (i.e. clustering), rather than prediction, through a prefix tree structure that could potentially illuminate the users with new insights of the data. 

\section{Conclusion}
\label{sec:conclusion}
In this paper we propose a novel way to efficiently organize GPS trip data into hierarchy to gain high level actionable insights from such data.
We represent each trip symbolically as a string that contains its spatial temporal information and create a trie from these strings.
The trie partitions the trips at multiple granularities at different levels and can be shown to be equivalent to the output of standard agglomerative hierarchical clustering with a specific metric.
We discuss several uses of the trie including discovering traffic dynamics and characterizing outliers, and an empirical evaluation of our proposed approach on a real world data set of taxis' GPS traces demonstrates its usefulness.



One future work direction is allowing flexible dynamic changes to our trie while new trip records are collected and added. 
Examples of the changes include dynamically modifying the trie such as merging/splitting nodes (i.e. regions) when new trip records are collected or if information about densities of trips or road infrastructure are to be considered.


\bibliographystyle{ACM-Reference-Format}
\bibliography{ref}

\end{document}